\documentclass[numsec,webpdf,traditional,mediumone]{oup-authoring-template}

\graphicspath{{Fig/}}

\usepackage{amsmath}
\usepackage{amsfonts}
\usepackage{amssymb}
\usepackage{graphicx}
\usepackage{float}
\usepackage{booktabs,tabularx}
\usepackage{multirow}
\usepackage{siunitx} 
\usepackage{rotating}  
\usepackage{pdflscape} 

\makeatletter

\@ifundefined{description}{%
  }{}
\makeatother
\usepackage{enumitem}
\usepackage{tcolorbox}
\usepackage{algorithm}
\usepackage{algpseudocode} 
\usepackage{tikz}
\usetikzlibrary{arrows.meta, positioning}

\tikzset{
  var/.style   = {circle, draw, minimum size=12mm, inner sep=0pt},
  fixed/.style = {rectangle, draw, minimum height=8mm, minimum width=8mm, inner sep=2pt},
  >=Stealth
}

\newlist{assump}{enumerate}{1}
\setlist[assump]{label=\textbf{A\arabic*.},ref=A\arabic*,
                 leftmargin=*,      
                 labelsep=0.6em,    
                 itemsep=0.25ex,    
                 topsep=0.5ex,      
                 align=left}        

\def\bSig\mathbf{\Sigma}

\newcommand{\abs}[1]{\lvert #1 \rvert}

\newcommand{\R}{\mathbb{R}}

\algrenewcommand\algorithmicrequire{\textbf{Require:}}
\algrenewcommand\algorithmicensure{\textbf{Ensure:}}
\algnewcommand\algorithmicinitialize{\textbf{Initialize:}}
\algrenewcommand\algorithmiccomment[1]{\hfill\(\triangleright\) #1}
\algnewcommand\Initialize{\item[\algorithmicinitialize]}

\theoremstyle{thmstyleone}%
\newtheorem{theorem}{Theorem}
\newtheorem{lemma}[theorem]{Lemma}%
\theoremstyle{thmstyletwo}%
\theoremstyle{thmstylethree}%

\begin{document}

\journaltitle{Biostatistics}
\DOI{DOI added during production}
\copyrightyear{2026}
\pubyear{YEAR}
\vol{XX}
\issue{x}
\access{Published: Date added during production}
\appnotes{Paper}

\firstpage{1}


\title[Mediation Analysis for ITRs]{Bayesian Mediation Analysis for Individualized Treatment Rules}

\author[1,$\ast$]{Emmanuel M. Rockwell}
\author[2]{Patrick J. Smith}
\author[1]{Michael R. Kosorok}
\author[3]{Nikki L. B. Freeman}

\address[1]{\orgdiv{Department of Biostatistics}, \orgname{University of North Carolina at Chapel Hill}, \orgaddress{\street{135 Dauer Drive, 3103 McGavran-Greenberg Hall CB 7420, Chapel Hill}, \postcode{27510}, \state{NC}, \country{United States}}}
\address[2]{\orgdiv{Department of Psychiatry}, \orgname{University of North Carolina at Chapel Hill}, \orgaddress{\street{333 S. Columbia Street, Suite 304 MacNider Hall, Chapel Hill}, \postcode{27514}, \state{NC}, \country{United States}}}
\address[3]{\orgdiv{Department of Biostatistics and Bioinformatics}, \orgname{Duke University}, \orgaddress{\street{2424 Erwin Road, Suite 1102 Duke University Medical Center, Durham}, \postcode{27710}, \state{NC}, \country{United States}}}

\corresp[$\ast$]{Corresponding author. \href{email:erockwell@unc.edu}{erockwell@unc.edu}}

\received{Date}{0}{Year}
\revised{Date}{0}{Year}
\accepted{Date}{0}{Year}



\abstract{The value of an individualized treatment rule (ITR), defined as the expected outcome under treatment assignment according to the rule, is useful for assessing average clinical benefit but does not explain how the benefit of a rule is generated. We propose a causal mediation framework for decomposing the value contrast between a prespecified candidate ITR and a clinically meaningful reference rule into direct and indirect components. Using rule-specific nested potential outcomes, we define natural direct and indirect rule effects that quantify the extent to which the improvement in value arises through pathways operating directly on the outcome versus through a specified mediator. We give identification conditions under which these components are identified by a rule-level mediation g-formula. For estimation, we adapt Bayesian causal mediation forests to obtain posterior inference for the value contrast and its path-specific components. Our simulations demonstrate that the proposed estimator achieved near-nominal credible interval coverage with decreasing bias and root mean squared error as sample size increased in settings with varying direct and mediated contributions. We further illustrate the method using data from the TRIUMPH trial, decomposing the cognitive benefit of a lifestyle intervention rule through candidate neurovascular, cardiorespiratory, and behavioral mediators. The proposed framework complements optimal ITR learning with explanation using mediation, providing a natural approach for mechanistic evaluation of ITRs.} 

\keywords{Precision medicine, individualized treatment rules, Bayesian statistics, Bayesian causal mediation forests}


\maketitle


\section{Introduction}
\label{s:intro}

A central aim of personalized medicine is the development of individualized treatment rules (ITRs) which assign treatment as a function of baseline patient covariates to maximize a prespecified clinical benefit \citep{Kosorok2015-hf, Mirnezami2012-gj, Chakraborty2013-yq, Li2023-qg}. Throughout, we use the term individualized treatment rule while acknowledging that related fields may use the term treatment policy. Advances in statistical precision medicine have developed useful methods for estimating optimal treatment rules \citep{Murphy2003-sm, Schulte2014-zb, Zhao2012-yr, Qian2011-sr}. While these approaches may yield rules that improve outcomes on average, they often offer limited insight into the underlying mechanisms through which a treatment affects outcomes. Nonetheless, in many scientific applications and clinical contexts, investigators are not only interested in whether a rule performs well, but how benefit is generated through underlying biological or behavioral pathways. In these settings, good decision rules are not sufficient on their own. To justify clinical implementation and generate new scientific understanding, investigators must also understand the mechanisms through which a rule improves outcomes.

The TRIUMPH trial investigated the effect of intensive lifestyle intervention (versus standardized education and physician advice) on resistant hypertension, with secondary outcomes measuring cognitive performance. Investigators found that the intervention resulted in clinically meaningful reductions in blood pressure and improvement in cognitive outcomes, but that treatment response was highly heterogeneous. For example, \citet{Avorgbedor2023-wj} found that baseline systemic inflammation significantly moderated the cognitive benefit of lifestyle intervention, with larger improvements among participants with greater pre-intervention inflammation. This heterogeneity in response motivates the use of ITRs to determine which patients are most likely to benefit from intensive lifestyle intervention. However, estimating an optimal ITR only addresses the question of who to treat with which intervention. Equally important from mechanistic and translational perspectives is why an ITR's recommended treatment improves outcomes among the selected patients. For example, the benefit of lifestyle intervention may operate through changes in blood pressure, weight, vascular function, inflammation, physical activity, or other behavioral and biological pathways, and the relative importance of these pathways may vary across the same patient characteristics used to assign treatment. In this work, we develop a causal mediation framework to evaluate the value of a prespecified individualized treatment rule and to decompose its benefit into direct and indirect components.

Mechanistic evaluation of a treatment rule can also extend beyond the sample from which the rule was learned. While a complete account of the pathways through which an ITR generates benefit is not necessarily a prerequisite for clinical adoption, this understanding can nonetheless aid interpretation and help clinicians and patients reason about why a rule works. Mechanistic insight may also inform transportability. When a rule's benefit operates largely through a particular mediator, its performance in a new population depends in part on that mediator's distribution in that setting. Decomposing the rule's value into pathway-specific components can help distinguish whether benefit is primarily attributable to treatment effects operating through specified mediators, the direct pathway, or both. Such information may refine mechanistic hypotheses, support clinical interpretation, and inform expectations about external validity. 

The remainder of the paper is organized as follows. Section \ref{s:background} reviews relevant background on ITRs, causal mediation analysis, and Bayesian estimation of mediation estimands. In Section \ref{s:methods}, we formalize notation, make foundational causal identification assumptions, define the direct and indirect components of the value function, and introduce the causal model. Section \ref{s:estimation} presents the estimation strategy using g-formula representations. Sections \ref{s:simulation} and \ref{s:application} report the results of the simulation study and a motivating application utilizing data from the TRIUMPH trial, which explores the effect of lifestyle interventions on cognitive function. The trial collects metabolic, vascular, and behavioral data from patients both at baseline and post-randomization generating a natural environment for exploring potential mediators of cognitive benefit. In Section \ref{s:discussion}, we conclude with a discussion of possible extensions, including dynamic treatment regimes and longitudinal mediation.

\section{Background}
\label{s:background}

A common goal of precision medicine is estimating ITRs, which tailor treatments to the unique health status of an individual by regarding patient heterogeneity as an informative feature rather than a nuisance \citep{Kosorok2019-ov}.  It is reasonable to expect that the covariates along which ITRs differentiate patient decisions may reflect variation in the biological or behavioral processes through which treatment affects the outcome. As a result, personalized medicine is closely related to subgroup analysis and heterogeneous treatment effect estimation, which also seeks to characterize variation in treatment response across patients \citep{Wager2015-kw, Hahn2017-dv, Rekkas2020-kq, Caron2022-fc, Han2023-mc, Hernan2016-sm}. However, identifying covariates that predict treatment benefit is distinct from explaining the causal mechanisms through which that benefit arises. Thus, although ITRs assign treatment based on heterogeneity in patient response, additional causal structure is needed to determine whether a rule improves outcomes directly via treatment, through intermediate biological or behavioral processes, or through both.

Mediation analysis studies how the effect of an exposure on an outcome operates both directly and through specified intermediate variables known as mediators. This typically involves decomposing the treatment effect into a direct effect, the component of the treatment effect measuring the influence of the exposure on the outcome holding other factors fixed at the control level (i.e., not operating through mediating variables), and an indirect effect, the component operating through the mediator of interest \citep{MacKinnon2012-oz, VanderWeele2016-ut, Imai2010-nc}. Foundational work in mediation analysis was rooted in linear structural equation models (LSEMs) \citep{Baron1986-si, MacKinnon2007-dd, Lee2019-wb}. More recent developments place mediation analysis within a causal framework \citep{Robins1992-vq, Pearl2001-bo}, formalizing natural direct and indirect effects in terms of causal pathways and specifying identification conditions for estimating these effects \citep{Tchetgen2012-cp, Avin2005-sv, Malinsky2019-po}. This work has expanded to accommodate multiple mediators, complex outcomes, intricate interaction structures, intermediate confounders, and Bayesian estimation strategies \citep{Miocevic2018-ev, Roy2022-vm, Sun2024-nm, Valeri2013-jd, Diaz2021-dk, Chen2024-sh}. 

A treatment effect is defined as a contrast between potential outcomes $Y_i(1)$ and $Y_i(0)$, only one of which is observed for any given unit \citep{Holland1986-ex}. Mediation estimands introduce additional counterfactual structure. Natural direct and indirect effects are defined through nested counterfactual outcomes. Note that the outcome under one treatment level and mediator under another cannot be jointly observed, even under randomization. Their identification therefore requires assumptions linking these counterfactual quantities to the observed data distribution.

Although identification results in mediation analysis are often stated nonparametrically, estimation typically requires modeling several nuisance functions, including the conditional distribution of the mediator and the conditional mean of the outcome given treatment, mediator, and baseline covariates \citep{Imai2010-nc, Tchetgen2012-cp}. Flexible modeling is attractive because it can accommodate nonlinearities and complex interactions within the covariate structure when estimating these nuisance functions. Furthermore, causal estimation requires additional care because regularized prediction methods can introduce bias in causal contrasts through phenomena such as regularization-induced confounding \citep{Hahn2017-dv}. These considerations have motivated Bayesian and machine learning approaches that are useful for flexibly modeling the nuisance functions that arise in mediation \citep{Chipman2010-cx, Hill2020-rq, Hahn2017-dv}. One foundational approach includes using Bayesian additive regression trees (BART), which provide a powerful nonparametric method for regularized conditional mean modeling with uncertainty quantification \citep{Chipman2010-cx, Hill2020-rq}, and have been widely adopted in causal inference for estimating heterogeneous treatment effects \citep{Hahn2017-dv, Wager2015-kw, Dorie2022-lt, Caron2022-fc}. Recent work has adapted these ideas to mediation settings. Specifically, \citet{Linero2025-it} proposed Bayesian causal mediation forests (BCMFs), a BART-based framework for estimating average causal mediation effects.

To date, most mediation methodology has focused on decomposing the average treatment effect of a fixed exposure into direct and indirect components \citep{MacKinnon2012-oz, Ge2023-ac}. In contrast, most work on ITRs has focused on characterizing and optimizing the value of a rule, typically defined as the expected outcome under the given rule. However, to our knowledge, no existing method has explored how to decompose the value of a prespecified treatment rule relative to a clinically meaningful reference rule into direct and indirect mechanistic components. The goal of this research is to quantify how much of the improvement delivered by a rule relative to a reference rule (such as the standard of care) arises from its direct influence on the outcome versus its effect on mediating variables, thereby extending mediation analysis to evaluate individualized treatment rules.

\section{Method Framework}
\label{s:methods}

Our objective is to determine how much of the value contrast $V(d)-V(r)$ is attributable to pathways operating through a mediator given a candidate ITR $d$ and a clinically meaningful reference rule $r$. We formalize this question by defining rule-specific estimands, identification assumptions, and the g-formula representation for the direct and indirect components of this value contrast.

\subsection{Setting and notation}

Because both mediation and ITRs require estimation of counterfactual quantities, we will adopt the potential outcomes framework commonly used in causal inference \citep{Rubin1974-wh, Rubin2004-er, Rubin2005-iy}. Capital letters denote random variables. Variables without a subscript are vectors, and variables with a subscript are scalar-valued. 

Let $X\in\mathcal X$ denote a vector of baseline covariates, $A\in \mathcal{A} = \{0,1\}$ denote a binary treatment, $M\in\mathcal M\subseteq \R$ denote a post-treatment mediating variable, and $Y\in \mathcal Y\subseteq \R$ denote the outcome of interest. For each individual $i$, we write $M_i(a)$ for the potential value of the mediator under treatment $a \in \{0,1\}$, and $Y_i(a,m)$ for the potential value of the outcome under treatment $a$ and mediator level $m$. 

Let $\mathcal D$ denote a class of measurable binary treatment rules and suppose $d: \mathcal X\to \{0,1\} \in\mathcal D$ is an individualized treatment rule of interest mapping from the space of covariates to the space of available treatment recommendations. Further suppose $r: \mathcal X \mapsto \{0,1\}$ is a reference rule. This rule serves as a baseline against which we evaluate $d$. Therefore, some appropriate choices for $r$ include the standard of care, the rule that treats no one, or the rule that treats everyone. 

Given two rules $d,r \in \mathcal D$, we define rule-specific mediator and outcome counterfactuals as $M_i^r(X_i) \equiv M_i\{r(X_i)\}$ and $Y_i^{d,r}(X_i) \equiv Y_i\left\{d(X_i), M_i^r(X_i)\right\}$. The value of an ITR $d$ is defined as the mean outcome under the rule,
\[V(d) \equiv E[Y\{d(X), M(d(X))\}] = E\left[ Y^{d,d}(X) \right],\]
which we interpret as the average clinical benefit obtained by treating the entire population according to rule $d$. Our goal is to decompose the value contrast $V(d) - V(r)$ between a candidate rule $d$ and a reference rule $r$ into components that reflect distinct causal pathways through the mediator. This value contrast encodes the average improvement we would see in the outcome if we treated the entire population under $d$ rather than under $r$. If $r$ is a clinically well-understood rule, then a decomposition of the value contrast will help isolate the underlying causal pathways along which the new rule $d$ is realizing a benefit over $r$.

In this paper, we treat $d$ and $r$ as prespecified rules to be evaluated. The candidate rule $d$ may be derived from clinical knowledge or a policy learning procedure. If $d$ is estimated using the same data used for evaluation, additional sample-splitting or inferential adjustment would be needed to account for estimation uncertainty.

\subsection{Mediation Estimands}

Figure \ref{fig:dag-example} is a directed acyclic graph illustrating a causal mechanism between $X$, $A$, $M$, and $Y$. The direct pathway is colored blue, and the mediated pathway is colored red. 

\begin{figure}[htbp]
  \centering
  \resizebox{0.6\linewidth}{!}{%
  \begin{tikzpicture}[node distance=2cm]
    \node[var] (X) {$X$};
    \node[var, below right=1cm and 3cm of X] (M) {$M$};
    \node[var, above right=1cm and 3cm of M] (Y) {$Y$};
    \node[var, below=2cm of M] (A) {$A$};

    \draw[->] (X) -- (M);
    \draw[->] (X) -- (Y);
    \draw[->, red, thick] (A) -- (M);
    \draw[->, blue, thick] (A) -- (Y);
    \draw[->, red, thick] (M) -- (Y);
  \end{tikzpicture}}
  \caption{Directed acyclic graph showing relationships among $X$, $A$, $M$, and $Y$. 
  Red arrows denote the indirect (mediated) path and the blue arrow denotes the direct path.}
  \label{fig:dag-example}
\end{figure}

Classical mediation analysis decomposes the average treatment effect into a natural direct effect and a natural indirect effect. We define the natural direct effect as the effect of changing treatment from 0 to 1 while holding the mediator fixed at the value it would have taken under the control \citep{Robins1992-vq, Pearl2001-bo, Hernan2020-ku}. We take care to distinguish the term \textit{natural} direct effect as opposed to \textit{controlled} direct effect, which is the effect of changing the treatment level while keeping the mediator fixed at the same level for everyone. While controlled direct effects are useful in prescriptive settings (e.g., modifying a relationship to match policy objectives), natural direct effects are better suited for descriptive settings \citep{Pearl2001-bo}. Mathematically, we can write the natural direct effect as
\[\zeta(a) = E\left[Y_i\{1, M_i(a)\} - Y_i\{0, M_i(a)\}\right]\]
We define the natural indirect effect as the effect of changing the mediator from the value it would take under treatment 0 to the value it would take under treatment 1 while holding treatment itself fixed. The natural indirect effect isolates the treatment effect operating through the mediator pathway. Note that ``controlled" indirect effects cannot be defined using the direct formulation, because the quantity $E[Y_i(a, m) - Y_i(a, m')]$ can be non-zero even when the treatment has no effect on the mediator \citep{Pearl2001-bo, Cinelli2025-my}. Therefore, the descriptive setting of ``natural" effects using nested potential outcomes is the preferred formulation of the two that correctly describes the quantity of interest. Mathematically, we can write the natural indirect effect as
\[\eta(a) = E\left[Y_i\{a, M_i(1)\} - Y_i\{a, M_i(0)\}\right].\]
so that the total effect satisfies $E\{Y_i(1,M_i(1)) - Y_i(0,M_i(0))\} = \zeta(0) + \eta(1) = \zeta(1) + \eta(0)$.

Rather than decomposing the average treatment effect, here our goal is to decompose the value contrast of a proposed rule. Let $d(x)\in\{0,1\}$ be the rule of interest and let $r(x)$ be a reference rule. Using the rule-specific potential outcomes introduced above, we define the following four outcomes: 
\begin{itemize}[leftmargin=1cm, labelsep=0.2cm]
    \item $Y^{d,d}(X) = Y\left(d(X), M\{d(X)\}\right)$ is the outcome under rule $d$ such that both the treatment and mediator are assigned according to $A = d(X)$. 
    \item $Y^{r,r}(X) = Y\left(r(X), M\{r(X)\}\right)$ is the outcome under the reference rule along both pathways.
    \item $Y^{d,r}(X) = Y\left(d(X), M\{r(X)\}\right)$ is the cross-world counterfactual outcome where the direct path is assigned $d(X)$ and the indirect path is assigned $r(X)$.
    \item $Y^{r,d}(X) = Y\left(r(X), M\{d(X)\}\right)$ is the cross-world counterfactual outcome where the direct path is assigned $r(X)$ and the indirect path is assigned $d(X)$.
\end{itemize}

We refer to $Y(1,M(0))$ and $Y(0,M(1))$ and their rule-level analogues as \textit{cross-world counterfactuals} because they combine the outcome response under one treatment assignment with the mediator value that would have arisen under another treatment assignment \citep{Hernan2020-ku}. Unlike ordinary potential outcomes such as $Y(1)$ and $Y(0)$, these cross-world counterfactuals cannot be empirically confirmed by an experimental intervention as they cannot occur simultaneously in the same individual. Nonetheless, these quantities are used in natural direct and indirect effect decompositions because they provide a formal way to separate the component of an effect operating through a mediator from the component operating directly on the outcome. This choice targets a natural effect or path-specific effect interpretation \citep{Tchetgen2012-cp, Shpitser2018-pl}. Other mediation estimands, including interventional direct and indirect effects and separable effects, avoid cross-world counterfactuals by targeting different hypothetical interventions or decompositions \citep{Robins2010-bt, Robins2022-mr}. We focus on the natural rule-level decomposition because it directly partitions the rule value contrast into components corresponding to the direct and indirect pathways in Figure \ref{fig:dag-example}. As we will later discuss, they can be estimated using the g-formula by introducing an additional independence assumption for identification. 

These rule-level expected outcomes allow us to decompose the rule contrast $V(d) - V(r)$ into two components by introducing four contrasts:
\begin{align*}
  \Delta_D^{(d)}(d,r) &= E\left\{ Y^{d,d}(X) - Y^{r,d}(X) \right\}, \\
  \Delta_D^{(r)}(d,r) &= E\left\{ Y^{d,r}(X) - Y^{r,r}(X) \right\}, \\
  \Delta_I^{(d)}(d,r) &= E\left\{ Y^{d,d}(X) - Y^{d,r}(X) \right\},\;\;\textrm{and}\\
  \Delta_I^{(r)}(d,r) &= E\left\{ Y^{r,d}(X) - Y^{r,r}(X) \right\}.
\end{align*}

We refer to $\Delta_D^{(d)}(d,r)$ and $\Delta_D^{(r)}(d,r)$ as the \textit{natural direct rule effect}, capturing the increase in value changing from rule $r$ to $d$ only along the pathway from treatment to outcome while holding the treatment along the mediator path constant. Analogously, $\Delta_I^{(d)}(d,r)$ and $\Delta_I^{(r)}(d,r)$ is the \textit{natural indirect rule effect}, capturing the component of the value contrast attributable to the mediated pathway through $M\{d(X)\}$. Then, by construction, there are two possible sequential decompositions of the value contrast given as
\[V(d) - V(r) = \Delta_D^{(r)}(d,r) + \Delta_I^{(d)}(d,r) = \Delta_D^{(d)}(d,r) + \Delta_I^{(r)}(d,r).\]

For the remainder of the paper, I will use the decomposition $V(d) - V(r) = \Delta_D^{(r)}(d,r) + \Delta_I^{(d)}(d,r)$ and simplify notation by writing $\Delta_D^{(r)}(d,r)\equiv \Delta_D(d,r)$ and $\Delta_I^{(d)}(d,r) \equiv \Delta_I(d,r)$.

\subsection{Identification assumptions}
\label{subsec:identification}

The challenge posed by natural mediation estimands is stronger than the usual missing data problem for treatment effects. For an ordinary treatment effect, each individual has potential outcomes $Y(1)$ and $Y(0)$, only one of which is observed. This is the \textit{fundamental problem of causal inference} asserted by Paul Holland \citep{Holland1986-ex}. In contrast, cross-world quantities such as $Y\{1,M(0)\}$ and $Y\{0,M(1)\}$ do not generally correspond to an intervention that could be directly implemented in a randomized trial. Identification of these quantities therefore requires assumptions that connect the nested counterfactuals to the observed data distribution.

Identification of the proposed rule-level mediation estimands relies on standard assumptions from the causal mediation literature \citep{Rubin2005-iy}. We adopt consistency, exchangeability, and positivity conditions as working assumptions \citep{Robins1992-vq, Imai2010-fz}. Under the assumptions below, these rule-level mediation estimands are causally identified by the mediation g-formula:
\begin{enumerate}[label=\textbf{A\arabic*.},leftmargin=1cm, labelsep=0.2cm]
\item (Consistency) When $A_i = a$ is observed, then $M_i = M_i(a)$. Additionally, when $A_i = a$ and $M_i = m$ are observed, then $Y_i = Y_i(a,m)$.

\item (Treatment Exchangeability) For all $a,a' \in \{0,1\}$, $m\in\mathcal M$, and $x \in \mathcal X$,
  \[ \{Y_i(a',m), M_i(a)\} \perp A_i \mid X_i = x.\]
  
\item (Cross-world Exchangeability) For all $a,a' \in \{0,1\}$, $m \in \mathcal M$, and $x \in \mathcal X$,
  \[Y_i(a',m) \perp M_i(a) \mid X_i = x.\]
  
\item (Positivity) For all $x \in \mathcal X$, $0 < P(A_i = a \mid X_i = x) < 1$ such that $a \in \{0,1\}$ and the mediator support under each treatment level overlaps sufficiently to evaluate the cross-world means appearing in the mediation g-formula.
\end{enumerate}

Together assumptions A1-A4 enable identification of the distribution of $\{Y_i(a,M_i(a')): a,a' \in \{0,1\}\}$ via the g-formula. That is, for any pair of rules \(d,r\in\mathcal D\),
\begin{align*}
   E\{Y^{d,r}(X)\}& = \int_{\mathcal X} \int_{\mathcal M} E(Y\mid A=d(x),M=m,X=x)\\
   &\qquad \times f_{M\mid A,X}\{m\mid A=r(x),X=x\} dm dx. 
\end{align*} 

In particular, the marginal distributions of $Y^{d,d}(X)$, $Y^{r,r}(X)$, $Y^{d,r}(X)$ and $Y^{r,d}(X)$ are identified for any pair of rules $(d,r)$ that map to $\{0,1\}$, and hence so are $V(d)$, $V(r)$ and the decomposition $V(d)-V(r)=\Delta_D(d,r)+\Delta_I(d,r)$. The proof of this statement is provided in the appendix. Throughout, we view these assumptions as idealized working conditions. However, in applications, we recommend sensitivity analyses to violations of A2 and A3 \citep{VanderWeele2010-nh, Valeri2013-jd}.

\section{Estimation of Effects}
\label{s:estimation}

The identifying conditions of Section \ref{subsec:identification} show that estimation of the rule-level mediation components requires models for the mediator distribution $f_{M\mid A,X}$ and the outcome regression $E(Y\mid M,A,X)$. We estimate these components using Bayesian causal mediation forests (BCMF) developed by \citet{Linero2025-it}, an extension of Bayesian additive regression trees (BART) \citep{Chipman2010-cx, Hill2020-rq, Hahn2017-dv, Linero2022-ae} adapted to mediation analysis. While other mediation estimation methods are suitable for this task, we choose BCMFs due to their flexibility in modeling nonlinearities and interactions, as well as their ability to quantify uncertainty. 

\subsection{Bayesian tree ensemble estimation}

To estimate the rule-level mediation components $\Delta_D(d,r)$ and $\Delta_I(d,r)$, we extend Bayesian causal mediation forests (BCMFs) \citep{Linero2025-it, Ting2025-xm}, a Bayesian tree ensemble approach that flexibly models the mediator and outcome regressions needed for g-formula estimation. Let
\begin{align*}
  M_i(a) \mid X_i &= m(a,X_i) + \varepsilon_{M_i},\\
  Y_i(a,m) \mid X_i &= y(m,a,X_i) + \varepsilon_{Y_i},
\end{align*}
where $m(\cdot)$ and $y(\cdot)$ are unknown regression functions and $\varepsilon_{M_i}$ and $\varepsilon_{Y_i}$ are normally distributed errors with mean zero. Following the BCMF framework, we model $m(\cdot)$ and $y(\cdot)$ using BART, with separate models for the mediator and outcome regressions and weakly informative priors on the tree structures and leaf parameters. Following \citet{Linero2025-it}, we make two adjustments to standard BART priors that are critical in the mediation setting.

\subsection{Regularization-induced confounding}

Regularization-induced confounding (RIC) is a well-known challenge of using BART-based models for causal inference. In predictive settings, shrinkage can reduce variance and improve out-of-sample performance. In causal settings, however, the same regularization can bias causal contrasts when a treatment or mediator is associated with covariates that are also prognostic for the outcome \citep{Hahn2017-dv}. In mediation analysis, this concern arises most directly in the outcome regression, where the effect of the mediator must be separated from baseline covariates that predict both the mediator and the outcome. \citet{Linero2025-it} attribute RIC to what they call prior dogmatism, where default BART priors favor relatively simple response surfaces, which can implicitly favor models in which the covariates play a limited confounding role in the relationship between mediator and outcome.

In mediation analysis the concern is twofold. First, the treatment $A$ is associated with covariates that also predict the mediator and the outcome. Second, and most directly, the mediator $M$ is not randomized, so its effect on $Y$ must be separated from baseline covariates that predict both $M$ and $Y$. We address these in turn by including an estimated propensity score and a set of clever covariates, following \citet{Linero2025-it}.

\subsubsection{Propensity scores}

To address the first concern, we include an estimate of the propensity score
\[\pi(X_i) = P(A_i = 1 \mid X_i)\]
estimated from the treatment and covariates alone, as a predictor in both the mediator model and the outcome model \citep{Hahn2017-dv}. Supplying $\hat\pi(X_i)$ as a one-dimensional summary of the covariates helps ensure the treatment's effects on the mediator and on the outcome are not influenced by confounding that the regularization would otherwise leave in.

\subsubsection{Clever covariates}

To mitigate the second form of regularization bias, we follow \citet{Linero2025-it} by including what are known as clever covariates in the outcome regression. Originally introduced as a tool in targeted maximum likelihood estimation, clever covariates in BCMFs are predictions of the mediator mean at each treatment level \citep{van-der-Laan2006-ru, Linero2025-it}.

To use clever covariates in practice, we first fit the mediator model to obtain posterior predictions of the conditional means
\[m_a(X_i) = E\{M_i \mid A_i = a,  X_i\}, \qquad a \in \{0,1\},\]
and then include $\left(\hat m_0(X_i), \hat m_1(X_i)\right)$ as additional predictors in the BART model for $Y_i(a,m)$. These clever covariates summarize how $X$ shifts the mediator under each treatment arm, helping to separate the prognostic structure from the treatment structure in the outcome regression. Empirically, this leads to reduced bias and more stable posterior inference for mediation effects in highly nonlinear, confounded settings. 

\subsubsection{Stratification on treatment}  

Rather than treating $A_i$ as one predictor among many in a single BART ensemble, we fit separate BART models for each arm, stratifying on $A$ to account for treatment interactions with $X$ when modeling $M$ and treatment interactions with $X$ and $M$ when modeling $Y$. Stratification captures treatment interactions with $X$ when modeling $M$, and treatment interactions with $X$ and $M$ when modeling $Y$. This yields treatment-specific BART response surfaces $s_y(m, 1, x)$ and $s_y(m, 0, x)$ for the outcome and $s_m(1, x)$ and $s_m(0,x)$ for the mediator. This technique is well suited to estimating average mediation effects \citep{Linero2025-it}, though if heterogeneous effects are of primary interest, the varying-coefficient parameterization that regularizes the effects directly may be preferable \citep{Ting2025-xm}.

\subsubsection{Randomization}

When treatment is randomized, treatment exchangeability is satisfied by design, so adjustment for $X$ is not needed to remove confounding of the relationship between treatment and mediator or treatment and outcome. However, randomization of $A$ does not remove confounding of the relationship between the mediator and outcome, because the mediator is measured after treatment and not itself randomized. The outcome regression must therefore still adjust flexibly for baseline covariates that may jointly predict $M$ and $Y$. In observational studies, treatment exchangeability would additionally require that all confounders of the relationship between treatment and outcome, and treatment and mediator be measured and appropriately adjusted for. 

\subsection{G-formula for rule-level effects}

Let $\theta$ index the collection of unknown model parameters of the BART models for the mediator and outcome. In other words, $\theta$ denotes all decision trees in the BART models so that $\theta$ governs the conditional distributions of $M$
and $Y$. Then $f_\theta(Y_i = y \mid M_i = m, A_i = a, X_i = x)$ is the conditional density of the outcome and $f_\theta(M_i = m \mid A_i = a, X_i = x)$ that of the mediator. Under assumptions A1-A4, the marginal distribution of the cross-world counterfactuals $Y_i\{a, M_i(a')\} : a,a'\in\{0,1\}$ is identified by the g-formula representation
\begin{align*}
    &f_\theta(Y_i\{a,M_i(a')\}= y\mid X_i = x)\\
    &\qquad = \int f_\theta(Y_i = y \mid M_i = m, A_i = a, X_i = x) \\
    &\qquad\qquad \times f_\theta(M_i = m\mid A_i = a', X_i = x)dm
\end{align*}
for all $x\in\mathcal X$ and $a, a'\in\{0,1\}$ \citep{Imai2010-nc}.

To average over the covariates we place a prior on the covariate distribution $F_X$, following \citet{Linero2025-it}. Two choices are the empirical distribution, which fixes the weights at $\omega_i = 1/n$, and the Bayesian bootstrap \citep{Rubin1981-af}, which treats them as random with $\omega = (\omega_1, \ldots, \omega_n) \sim \mathrm{Dirichlet}(1, \ldots, 1)$ independently of the remaining parameters. In both cases $\omega_i \ge 0$ and $\sum_{i=1}^n \omega_i = 1$. Then, we can estimate the counterfactual expectation for a given treatment combination as
\begin{align*}
    E_\theta[Y_i\{a, M(a')\}] &= \sum_{i=1}^N \omega_i \int y f_\theta(Y_i = y \mid M_i = m, A_i = a, X_i)\\
    &\qquad\times f_\theta(M_i = m \mid A_i = a', X_i) dm\,dy.
\end{align*}

The estimand of interest, however, is not a static contrast at fixed treatment levels but a rule-level counterfactual mean, in which treatments are assigned by decision rules rather than held at constant values. Then, for a fixed rule pair $(d, r)$ with $d, r \in \mathcal{D}$, the rule-level counterfactual mean is
 \begin{align}
    E_\theta[Y^{d,r}(X)] &= E_\theta[Y_i\{d(X), M(r(X))\}]\notag\\
    &= \sum_{i=1}^N \omega_i \int s_y(Y_i = y \mid M_i = m, A_i = d(X_i), X_i)\notag\\ 
    &\quad \times f_\theta(M_i = m \mid A_i = r(X_i), X_i) dm
    \label{eq:gformula}
\end{align}

Repeating this procedure for \(Y^{d,d}(X)\), \(Y^{r,r}(X)\), \(Y^{d,r}(X)\), and \(Y^{r,d}(X)\) yields posterior draws of \(V(d)\), \(V(r)\), and the corresponding direct and indirect rule-effect components. While \eqref{eq:gformula} doesn't have a closed form when $s_y$ and $s_m$ are estimated using BART, we can approximate it by simulating many pseudodatasets. These pseudodatasets include the modeled estimate of the mediator under both treatment levels. To explain this process, let $F_M^-(\,\cdot \mid A=a, X_i)$ denote the generalized inverse CDF of the mediator distribution at the current posterior draw of $\theta$. For each replicate $k = 1, \ldots, K$, we draw a single uniform random variable $U_{ik}\sim\mathrm{Uniform}(0,1)$ per individual and use it to define
\[M^*_{ik}(a) = F_M^-\bigl(U_{ik} \mid A = a, X_i\bigr), \qquad a \in \{0,1\}.\]  

Because the same $U_{ik}$ is used at both treatment levels, the draws $M^*_{ik}(0)$ and $M^*_{ik}(1)$ are comonotone and are highly correlated \citep{Puccetti2015-fc}. This aids estimation because cross-world differences such as $s_y\{M^*_{ik}(d), \cdot, X_i\} - s_y\{M^*_{ik}(r), \cdot, X_i\}$ have the same noise, thereby decreasing the Monte Carlo variance. Then, with our estimates of $M^*_{ik}(a')$, we obtain estimates of the direct and indirect rule components $\Delta_D(d,r)$ and $\Delta_I(d,r)$
\begin{align*}
\Delta_D(d,r) &= \frac{1}{K} \sum_{k=1}^K \sum_{i=1}^N \omega_i \Big(s_y\big[M^*_{ik}\{r(X_i)\}, d(X_i), X_i\big] - s_y\big[M^*_{ik}\{r(X_i)\}, r(X_i), X_i\big]\Big)\\
\Delta_I(d,r) &= \frac{1}{K} \sum_{k=1}^K \sum_{i=1}^N \omega_i \Big(s_y\big[M^*_{ik}\{d(X_i)\}, r(X_i), X_i\big]- s_y\big[M^*_{ik}\{r(X_i)\}, r(X_i), X_i\big]\Big).
\end{align*}

\begin{algorithm}[t]
\caption{Monte Carlo g-formula for rule-level mediation}
\label{alg:gformula_rule}
\begin{algorithmic}[1]
\Require $\theta$, $K$, and $X_1,\ldots, X_N$
\For{$k=1,\ldots,K$}
  \For{$i=1,\ldots,N$}
    \State Sample $U_{ik}\sim \mathrm{Uniform}(0,1)$
    \State Set $M^*_{ik}\{d(X_i)\}=F_M^-\!\left(U_{ik}\mid A=d(X_i),X_i\right)$
    \State Set $M^*_{ik}\{r(X_i)\}=F_M^-\!\left(U_{ik}\mid A=r(X_i),X_i\right)$
  \EndFor
\EndFor
\State Compute the four rule-level counterfactual mean estimates
\begin{align*}
\widehat E[Y^{d,d}(X)] &= \frac{1}{K}\sum_{k=1}^K \sum_{i=1}^N \omega_i s_y\left(M^*_{ik}\{d(X_i)\},\, d(X_i),\, X_i\right),\\
\widehat E[Y^{d,r}(X)] &= \frac{1}{K}\sum_{k=1}^K \sum_{i=1}^N \omega_i s_y\left(M^*_{ik}\{r(X_i)\},\, d(X_i),\, X_i\right),\\
\widehat E[Y^{r,d}(X)] &= \frac{1}{K}\sum_{k=1}^K \sum_{i=1}^N \omega_i s_y\left(M^*_{ik}\{d(X_i)\},\, r(X_i),\, X_i\right),\\
\widehat E[Y^{r,r}(X)] &= \frac{1}{K}\sum_{k=1}^K \sum_{i=1}^N \omega_i s_y\left(M^*_{ik}\{r(X_i)\},\, r(X_i),\, X_i\right).
\end{align*}
\State Return $\hat\Delta_{D}(d,r) = \widehat E[Y^{d,r}(X)] - \widehat E[Y^{r,r}(X)]$ and $\hat\Delta_{I}(d,r) = \widehat E[Y^{d,d}(X)] - \widehat E[Y^{d,r}(X)]$.
\end{algorithmic}
\end{algorithm}

Repeating Algorithm \ref{alg:gformula_rule} over posterior draws and bootstrap weights $\omega$ yields posterior samples for $V(d)$, $V(r)$, $\Delta_D(d,r)$, and $\Delta_I(d,r)$. In practice, the uniform construction in Algorithm \ref{alg:gformula_rule} makes the Monte Carlo approximation highly stable, so that modest values of $K$ are typically sufficient.

\section{Simulation Study}
\label{s:simulation}

\begin{table*}[t]
\centering
\footnotesize
\setlength{\tabcolsep}{3pt}
\renewcommand{\arraystretch}{1.1}
\begin{tabularx}{\textwidth}{r *{4}{>{\centering\arraybackslash}X} *{4}{>{\centering\arraybackslash}X} *{4}{>{\centering\arraybackslash}X}}
\toprule
  & \multicolumn{4}{c}{\textit{Total}}
  & \multicolumn{4}{c}{$\widehat{\Delta}_D$}
  & \multicolumn{4}{c}{$\widehat{\Delta}_I$} \\
\cmidrule(lr){2-5}\cmidrule(lr){6-9}\cmidrule(lr){10-13}
$n$
  & Bias & RMSE & Cov & Width
  & Bias & RMSE & Cov & Width
  & Bias & RMSE & Cov & Width \\
\midrule
\multicolumn{13}{@{}l}{\textit{Scenario 1 (100\% Direct and 0\% Indirect)}} \\
  \phantom{0}400 & 0.01 & 0.13 & 0.97 & 0.51 & 0.00 & 0.06 & 0.97 & 0.30 & 0.01 & 0.09 & 0.99 & 0.42 \\
  \phantom{0}800 & 0.00 & 0.08 & 0.97 & 0.35 & 0.00 & 0.04 & 0.97 & 0.19 & 0.01 & 0.07 & 0.96 & 0.31 \\
  1200           & 0.00 & 0.07 & 0.95 & 0.28 & 0.00 & 0.04 & 0.96 & 0.15 & 0.00 & 0.06 & 0.96 & 0.25 \\
\addlinespace[4pt]
\multicolumn{13}{@{}l}{\textit{Scenario 2 (75\% Direct and 25\% Indirect)}} \\
  \phantom{0}400 & 0.01 & 0.13 & 0.95 & 0.51 & 0.02 & 0.06 & 0.98 & 0.28 & 0.00 & 0.10 & 0.99 & 0.43 \\
  \phantom{0}800 & 0.00 & 0.08 & 0.97 & 0.35 & 0.01 & 0.04 & 0.98 & 0.18 & 0.00 & 0.07 & 0.97 & 0.31 \\
  1200           & 0.00 & 0.07 & 0.96 & 0.28 & 0.01 & 0.03 & 0.97 & 0.14 & -0.01 & 0.06 & 0.97 & 0.25 \\
\addlinespace[4pt]
\multicolumn{13}{@{}l}{\textit{Scenario 3 (50\% Direct and 50\% Indirect)}} \\
  \phantom{0}400 & 0.01 & 0.13 & 0.96 & 0.51 & 0.02 & 0.06 & 0.98 & 0.27 & -0.01 & 0.10 & 0.99 & 0.44 \\
  \phantom{0}800 & 0.00 & 0.08 & 0.97 & 0.35 & 0.01 & 0.04 & 0.98 & 0.17 & -0.01 & 0.08 & 0.97 & 0.32 \\
  1200           & 0.00 & 0.07 & 0.97 & 0.28 & 0.01 & 0.03 & 0.98 & 0.13 & -0.01 & 0.06 & 0.95 & 0.26 \\
\addlinespace[4pt]
\multicolumn{13}{@{}l}{\textit{Scenario 4 (25\% Direct and 75\% Indirect)}} \\
  \phantom{0}400 & 0.01 & 0.13 & 0.96 & 0.51 & 0.03 & 0.06 & 0.95 & 0.27 & -0.02 & 0.11 & 0.97 & 0.45 \\
  \phantom{0}800 & 0.00 & 0.09 & 0.96 & 0.35 & 0.01 & 0.04 & 0.98 & 0.17 & -0.01 & 0.08 & 0.98 & 0.32 \\
  1200           & 0.00 & 0.07 & 0.96 & 0.28 & 0.01 & 0.03 & 0.96 & 0.13 & -0.01 & 0.06 & 0.96 & 0.26 \\
\addlinespace[4pt]
\multicolumn{13}{@{}l}{\textit{Scenario 5 (0\% Direct and 100\% Indirect)}} \\
  \phantom{0}400 & 0.01 & 0.13 & 0.95 & 0.51 & 0.04 & 0.07 & 0.95 & 0.28 & -0.03 & 0.11 & 0.98 & 0.47 \\
  \phantom{0}800 & 0.01 & 0.09 & 0.97 & 0.36 & 0.02 & 0.04 & 0.99 & 0.18 & -0.01 & 0.08 & 0.97 & 0.34 \\
  1200           & 0.00 & 0.07 & 0.96 & 0.29 & 0.01 & 0.03 & 0.96 & 0.13 & -0.01 & 0.07 & 0.95 & 0.27 \\
\bottomrule
\end{tabularx}
\caption{%
  Simulation performance of the BCMF estimator across scenarios and sample sizes ($S = 200$ replicates per cell). For each estimand, we report Monte Carlo bias (Bias), root mean squared error (RMSE), empirical 95\% credible interval coverage (Cov), and mean 95\% credible interval width (Width). The total effect oracle is fixed at $\textit{Total}^{*} = 0.500$ across all scenarios; the direct/indirect oracle split varies as shown.
}
\label{tab:sim_performance}
\end{table*}

We consider five simulation scenarios under varying shares of direct and indirect influence on the outcome. These scenarios are: all direct (100\% direct and 0\% indirect), primarily direct (75/25), balanced (50/50), primarily indirect (25/75), and all indirect (0/100). The simulation will be assessed by the method's ability to recover both the raw direct and indirect rule effects as well as their proportions relative to the estimate of the total effect. 

\subsection{Data Generation}

We generated data with $p=6$ baseline covariates, a binary treatment assignment, a continuous mediator, and a continuous outcome. Covariates $X = (X_1,\ldots, X_6)^\top$ were drawn from a multivariate normal distribution $MVN(\mu,\Sigma)$ with $\mu = 0$ and an AR(1) covariance structure such that $\Sigma_{jk}= \rho^{\abs{j-k}}$ where $\rho = 0.1$, reflecting mild pairwise correlation among covariates. To reflect the environment of a randomized controlled trial and satisfy exchangeability assumptions, treatment was assigned independently of $X$. The three levels of sample sizes are $N \in \{400, 800, 1200\}$. 

The mediator was generated from a treatment model $M\mid A, X \sim Normal(\mu_{M},\sigma_M^2)$ where $\mu_M = \beta_0 + X^\top \beta + (\beta_A + X^\top\beta_{AX})A$. Similarly, the outcome was generated from $Y\mid A, M, X \sim Normal(\mu_{Y}, \sigma_Y^2)$ where $\mu_Y = \theta_0 + X^\top\theta + (\theta_A + X^\top\theta_{AX})A + \theta_M M + \gamma(X_1^2 - 1)$. The quadratic term $\gamma(X_1^2 - 1)$ introduces nonlinearity meant to reflect realistic clinical measures such as sleep duration.

We compared an active treatment rule $d$ to a reference rule $r$. The rule $d$ was a threshold rule that assigns treatment when a weighted linear index of four covariates exceeds a threshold
\[d(X) = I(X_1 + 0.8 X_2 + 0.2 X_3 + 0.1 X_4 \geq 0.25).\]
Weights are designed to be beneficial but imperfect, reflecting the type of rule one might realistically recover using precision medicine techniques such as Q-learning or outcome weighted learning. Therefore, the chosen rule approximates, without itself being, the optimal oracle rule under the given data generating process. The reference rule treats no one, assigning treatment value $A=0$ to all subjects. 

\begin{figure*}
  \centering
  \includegraphics[width=\linewidth]{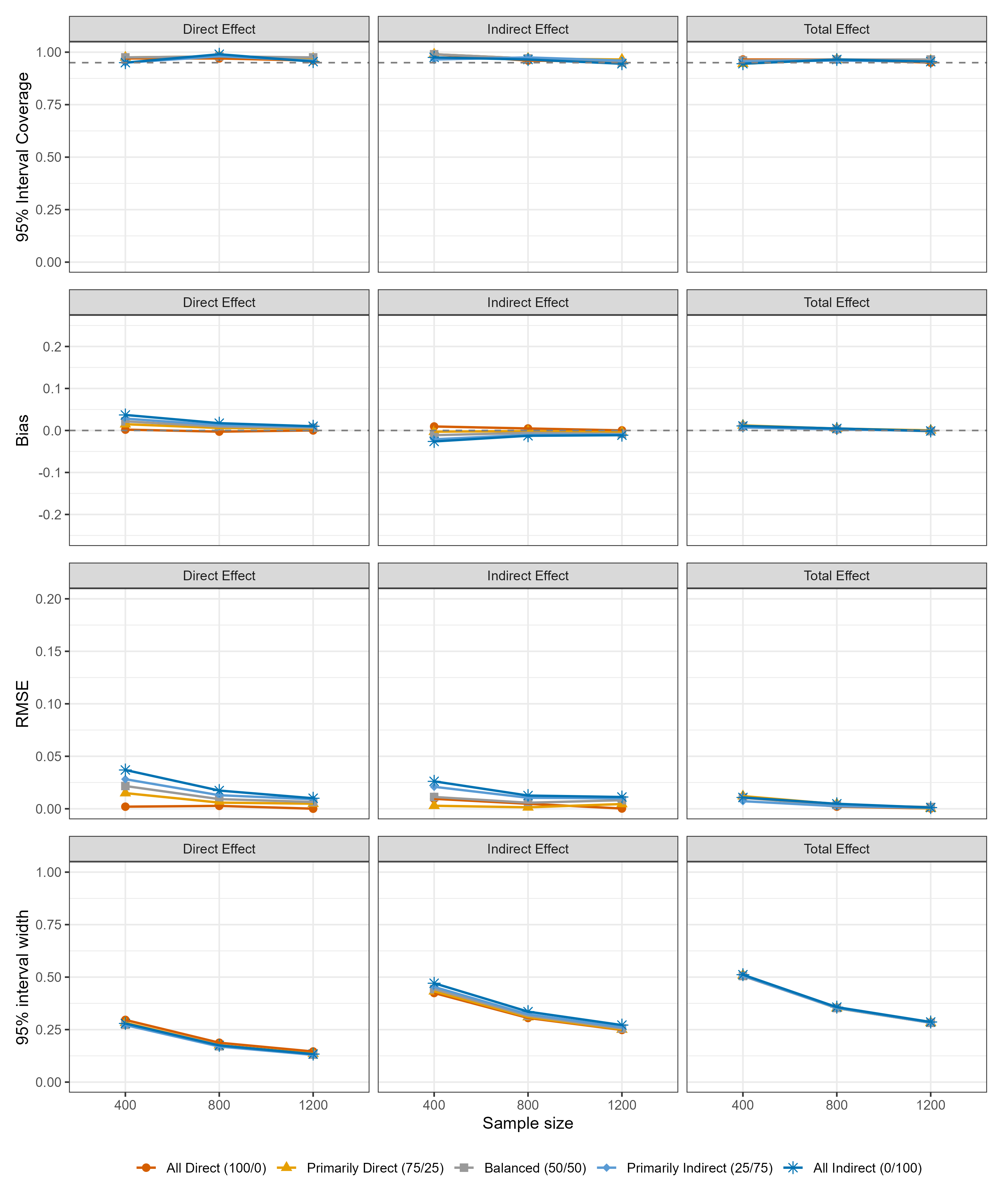} 
  \caption{95\% confidence interval coverage, bias, RMSE, and interval width for all mediation scenarios and sample sizes}
  \label{fig:metrics}
\end{figure*}

\subsection{Simulation Performance}

We ran $S=200$ independent replicates for each combination of the five scenarios and three sample sizes. Within each replicate, a dataset of size $n$ was generated, and the value for the reference rule and rule of interest were estimated. Furthermore, the value contrasts of the two rules were derived, decomposed into indirect and direct effects, and compared to the scenario-specific oracle value. It should be noted that decompositions are sensitive to the choice of the reference rule and the rule of interest by construction. Because the decomposition reflects how a particular rule reallocates treatment, two rules over the same population can produce different splits of the direct and indirect effect. In the extreme, a rule that never treats anyone whose treatment would shift the mediator yields no indirect effect, even when a mediated pathway is present in the population.

Simulation performance was assessed in terms of bias, root mean squared error (RMSE), empirical 95\% credible interval coverage of the oracle, and credible interval width. These results are reported in Table \ref{tab:sim_performance} and visualized in Figure \ref{fig:metrics}. 

Across all scenarios and sample sizes, the simulations achieved nominal coverage for all estimands. At smaller sample sizes, the direct rule effect tended to demonstrate a slight bias upward and the indirect rule effect a slight bias downward in more heavily mediated scenarios. The magnitude of this bias decreased as the sample size increased. Similarly, the RMSE was larger for both the direct and indirect rule effects in more heavily mediated scenarios, but decreased to values similar to those of less heavily mediated scenarios as the sample size increased. Finally, the 95\% credible interval width was similar across all scenarios for all estimands, with interval width decreasing with sample size as expected. Notably, the interval width is slightly wider for the indirect rule effect than for the direct rule effect across all scenarios and sample sizes.

In addition to recovering the raw estimates of the direct and indirect rule effects, one may also be interested in recovering the proportions of direct and indirect effects relative to the estimated total effect. Figure \ref{fig:split} shows the estimated proportion of direct and indirect shares relative to the oracle value as a dashed line for the $n=1200$ case. As we can see, in finite samples, there is a small attenuation of the indirect effect, which in this case is about a 2\% smaller than the true value.  

\begin{figure*}
  \centering
  \includegraphics[width=\linewidth]{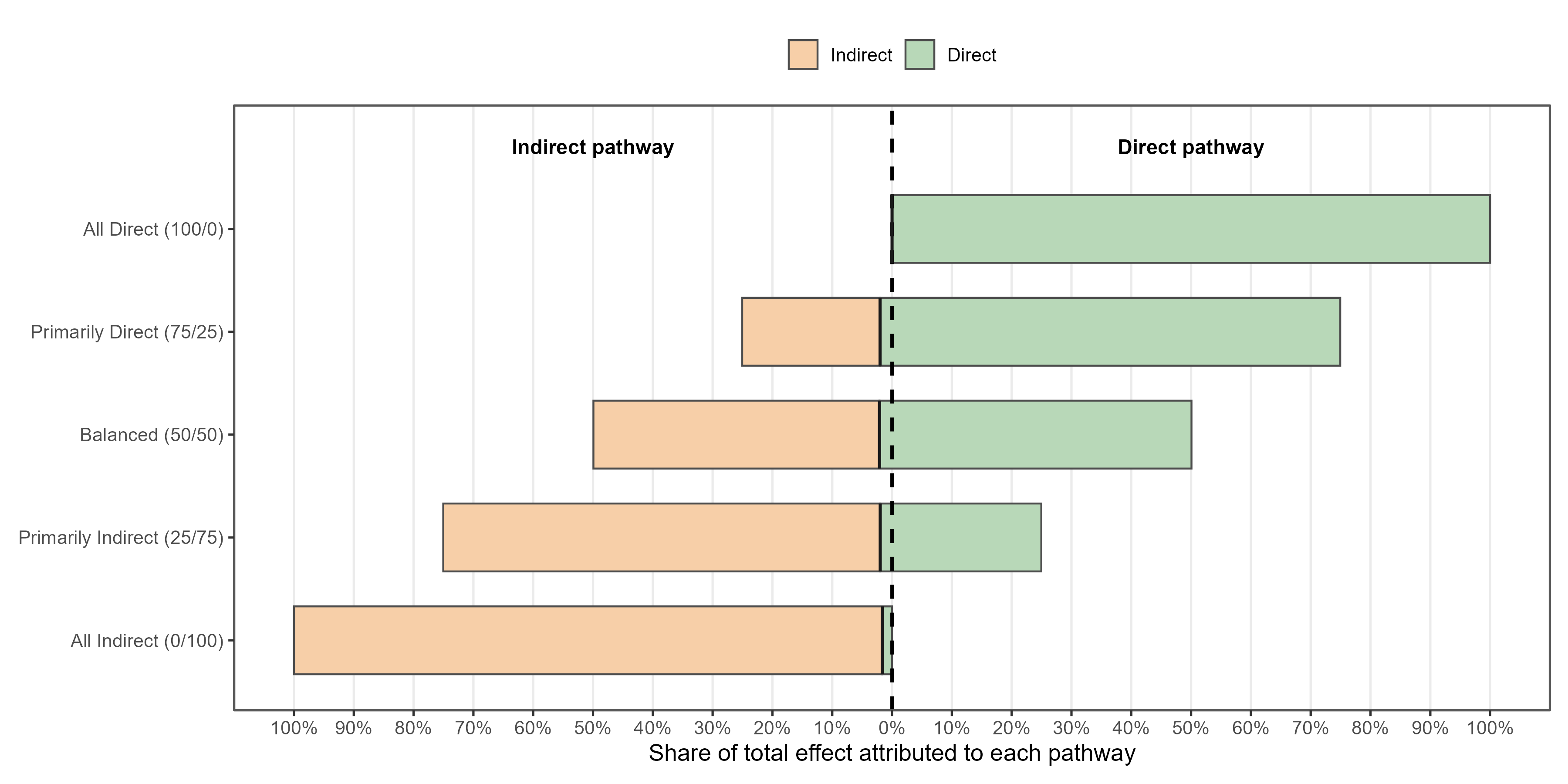} 
  \caption{Proportion of estimated direct and indirect effect for the $n = 1200$ case. Solid lines represent the estimated share, while the dotted line denotes the oracle share.}
  \label{fig:split}
\end{figure*}

\section{Real Data Analysis - TRIUMPH Trial}
\label{s:application}

We applied our mediation framework to the data from the Treating Resistant Hypertension Using Lifestyle Modification (TRIUMPH) trial, a randomized controlled trial comparing an intensive lifestyle program combining diet, behavioral weight management, and supervised aerobic exercise (C-LIFE) against a standardized education and physician advice (SEPA) control \citep{Blumenthal2015-pk}. Patients ($N = 140$) were randomized 2:1 to C-LIFE versus SEPA over a four-month intervention period. Although the parent trial was designed to assess blood pressure outcomes, ancillary measurements collected cognitive executive function as an exploratory outcome of interest. Supporting analyses further examined the relationship between cognitive function and candidate vascular, metabolic, and behavioral mechanisms \citep{Smith2022-et}, making TRIUMPH a useful setting for illustrating mediation analysis for individualized treatment rules. 

The motivating scientific question is not only whether C-LIFE improves cognitive outcomes, but also whether an individualized rule that recommends C-LIFE to high risk participants produces its cognitive benefit through specific causal mechanisms. Prior TRIUMPH analyses suggest that cognitive response to lifestyle modification may be modified with respect to baseline inflammatory and vascular risk profiles \citep{Avorgbedor2023-wj, Smith2022-et}. In addition, TRIUMPH collected several post-treatment measures plausibly related to cognitive benefit, including cerebrovascular oxygenation, aerobic fitness, and sleep quality \citep{Smith2022-hx, Smith2023-xk}. We use the proposed decomposition to ask whether the value improvement of a candidate treatment rule over a reference rule that assigns SEPA to everyone is explained by any one of these candidate mediators.

Let $A \in \{0, 1\}$ denote randomized assignment to SEPA (0) or C-LIFE (1), $X$ denote baseline covariates, $M$ denote a candidate mediator measured at four months, and $Y$ denote cognitive executive function measured at four months (with higher values indicating improvement). The estimands of interest are
\begin{align*}
  \Delta_D(d, r) &= E[Y^{d,r}(X)] - E[Y^{r,r}(X)]\quad\textrm{and}\\
  \Delta_I(d, r) &= E[Y^{d,d}(X)] - E[Y^{d,r}(X)],
\end{align*}
where $r$ is the reference rule ($r(X) \equiv 0$) and $d$ is a clinically motivated treatment rule. We restricted to participants with complete data for the outcome, candidate mediator, and all 12 baseline covariates. Depending on the choice of mediator, this yielded between n=113 and n=120 participants. Baseline covariates included age, sex, education, waist-to-hip ratio, baseline ambulatory SBP, baseline ambulatory DBP, baseline peak VO2, baseline CRP, metabolic risk z-score (a composite of BP, lipids, and adiposity), baseline executive function, baseline processing speed, and baseline memory.

The value of a rule is the expected executive function score that would be observed if treatment were assigned according to that rule. The total value improvement can be represented as $E[Y^{d,d}(X)] - E[Y^{r,r}(X)]$, which compares the expected outcome under the candidate ITR with the expected outcome under the reference rule. Because each mediator is analyzed separately, the direct component should be interpreted as all pathways not operating through that mediator, including pathways through other measured or unmeasured mechanisms.

The ITR of interest $d$ assigns C-LIFE to patients with high vascular and inflammatory burden. The rule uses a composite score that thresholds a linear combination of high baseline CRP (50\%), low aerobic fitness (30\%), and high ambulatory SBP (20\%), each standardized with zero mean and unit variance within the analysis sample. Patients scoring above 0 (corresponding to an above-average composite risk) are recommended C-LIFE. The remainder receives SEPA. The choice of variable and weighting reflects the TRIUMPH literature's ranking of effect modifiers \citep{Avorgbedor2023-wj, Smith2022-et, Blumenthal2015-pk}, rendering an empirically derived rule. 

This rule should be interpreted as an empirically motivated rule informed by literature rather than as an estimated optimal rule. The zero threshold corresponds to above average composite risk in the analysis sample. Consequently, the rule assigned approximately 43-46\% of participants to C-LIFE across the complete case samples for each mediator. This allocation proportion is a consequence of the chosen threshold and should not be interpreted as a general requirement that an ITR allocate treatment to half of the target population.

\subsection{Mediation Scenarios}

As mentioned earlier, we evaluated three candidate mediators: change in oxygenated hemoglobin during cognitive challenges assessed using functional near-infrared spectroscopy (fNIRS), change in aerobic fitness assessed using VO2 max, and change in sleep quality. These mediators were chosen to represent three distinct neurovascular, metabolic/cardio-respiratory, and behavioral mechanisms through which the treatment may affect the outcome \citep{Smith2022-hx, Avorgbedor2023-wj, Smith2023-xk}. Each mediator was analyzed separately. Consequently, the direct component in a given analysis should be interpreted as the component not operating through that specific mediator. It may still include pathways through other measured or unmeasured mediators.

Appropriate mediators must occur post-treatment and pre-outcome in order to lie on the causal pathway between $A$ and $Y$. Within the TRIUMPH trial, all three mediators were collected as a change from baseline (post-treatment) at a four-month assessment. Cognitive benefit was also observed at this four-month follow-up, raising potential concerns about simultaneity and reverse causality. However, while oxygenated hemoglobin and VO2 max are measured at the same visit as cognitive benefit, they reflect a biologically antecedent state. Additionally, sleep quality is a self-reported measure recording the quality of sleep over the preceding month. Therefore, sleep is a retrospective measure that resolves the simultaneity problem. 

\begin{table}[t]
\centering
\setlength{\tabcolsep}{4pt}
\renewcommand{\arraystretch}{1.15}
\begin{tabular*}{\columnwidth}{@{}l@{\extracolsep{\fill}}ccc@{}}
\toprule
  & \shortstack[c]{\textit{fNIRS oxygenation}}
  & \shortstack[c]{\textit{Aerobic fitness}}
  & \shortstack[c]{\textit{Sleep quality}} \\[3pt]
\midrule
\multicolumn{4}{@{}l}{\textit{Sample}} \\
\qquad $n$                                    & 116       & 120       & 113       \\
\qquad C-LIFE\,/\,SEPA                       & 82\,/\,34 & 85\,/\,35 & 81\,/\,32 \\
\qquad Percent assigned to rule (\%)              & 45.7      & 42.5      & 46.0      \\[3pt]
\multicolumn{4}{@{}l}{\textit{ITR outcomes}} \\
\qquad Value under rule, $\hat{V}(d)$          & $65.8^{*}$ & $65.8^{*}$ & $64.8^{*}$ \\
\qquad Value under reference, $\hat{V}(r)$     & $62.1^{*}$ & $62.1^{*}$ & $60.3^{*}$ \\[3pt]
\multicolumn{4}{@{}l}{\textit{Effect decomposition}} \\
\qquad Total benefit, $\hat{V}(d)-\hat{V}(r)$                          & $3.66^{*}$ & $3.76^{*}$ & $4.52^{*}$ \\
\qquad Direct effect, $\widehat{\Delta}_D(d,r)$    & $3.43$     & $3.48^{*}$ & $4.41^{*}$ \\
\qquad Indirect effect, $\widehat{\Delta}_I(d,r)$  & $0.23$     & $0.28$     & $0.11$     \\[3pt]
\multicolumn{4}{@{}l}{\textit{Mediated share (\%)}} \\
\qquad Direct share             & $96.5^{*}$ & $93.1^{*}$ & $97.7^{*}$ \\
\qquad Indirect share           & $3.5$      & $7.0$      & $2.3$      \\
\bottomrule \\[-8pt]
\end{tabular*}
\caption{%
  Mediation decomposition of the ITR benefit on executive function for three candidate mediating mechanisms: cerebral oxygenation, aerobic fitness, and sleep quality. All estimates are posterior means. Asterisks denote that the posterior 95\% credible interval excludes zero.
}
\label{tab:real_data}
\end{table}

\subsection{Results}

Table \ref{tab:real_data} reports posterior mean estimates for each analysis. Across the three complete-case samples, the rule assigned approximately 43-46\% of participants to C-LIFE. The estimated value under the rule exceeded the value under the reference rule for all three mediators, with estimated total benefits ranging from 3.66 to 4.52 points on the executive function scale.

For the given rule of interest, the benefit of the rule of interest over the treat-none rule is primarily realized via the direct pathway, with shares of 96.5\%, 93.1\%, and 97.7\% for oxygenated hemoglobin, aerobic fitness, and sleep quality, respectively. In contrast, the estimated indirect components were small: 3.5\% for oxygenated hemoglobin, 7.0\% for aerobic fitness, and 2.3\% for sleep quality. Therefore, the decomposition suggested that most of the estimated rule benefit was not mediated by any single candidate mediator considered here, as the estimated indirect components for cerebral oxygenation, aerobic fitness, and sleep quality were small. The corresponding credible intervals included zero, indicating that we cannot conclude that there was a significant contribution along the mediated pathway for any of the three candidate mediators. 

However, this pattern should not be interpreted as evidence that lifestyle modification has no mechanistic effects, as the analysis is relative to the choice of ITRs $d$ and $r$. Furthermore, because each mediator was considered separately, the direct component includes all pathways not operating through the specific mediator under analysis, including pathways through other physiological or behavioral mechanisms. It is also important to note that while the rule $d$ realized an increase in value relative to $r$ across all three scenarios, this margin was small relative to the value delivered by $r$. The total value improvement of $d$ relative to $r$ is the quantity being decomposed. When this total improvement is small, there is limited benefit to explain mechanistically. Consequently, even a mediator that accounts for a nontrivial share of the total improvement may correspond to a small absolute indirect effect. The small magnitude of the total effect could partly be due to the fact that $r$ is not a true ``treat-none" rule as it assigned SEPA to everyone. Assuming SEPA has a positive value, the increase in value relative to a rule that truly assigns no intervention would be higher. In other settings, we would hope that the chosen ITR would realize a greater benefit over the rule that treats no one.    

Overall, the TRIUMPH application illustrates how mediation for ITRs can move beyond estimating the value of an individualized rule to ask how that value is produced. In this example, the estimated cognitive benefit of the rule appeared to be driven primarily by pathways other than the three individual mediators considered. Future applications with larger samples and stronger rule contrasts may be better positioned to identify mechanistic pathways underlying ITR value.

\section{Discussion}
\label{s:discussion}

In this paper, we propose a mediation framework for individualized treatment rules that decomposes a contrast of the value function into direct and indirect components. Unlike classical mediation analysis, which focuses on decomposing a fixed treatment effect, our approach targets mechanistic explanations of why a candidate rule outperforms a reference rule. We formalize rule-level potential outcomes $\{Y^{d,d}(X),Y^{r,r}(X),Y^{d,r}(X),Y^{r,d}(X)\}$, showing that the corresponding decomposition is identified under standard identifiability assumptions, and propose a Bayesian estimation strategy that adapts Bayesian causal mediation forests for ITR evaluation.

The key methodological innovation is the replacement of fixed treatment levels by rule-induced actions, coupled with a Monte Carlo implementation that produces posterior samples of $V(d)$, $V(r)$, $\Delta_D(d,r)$, and $\Delta_I(d,r)$. The use of stratification on $A$ and clever covariates follows prior BCMF recommendations to stabilize inference in the presence of complex confounding and nonlinear interactions. From an applied perspective, the proposed estimation strategy yields an interpretable decomposition of the value improvement delivered by an ITR relative to a clinically meaningful baseline.

Several limitations and practical considerations remain. First, the interpretation of the decomposition hinges on assumptions A2-A3 of no unmeasured confounding, which cannot be verified in observational settings. Because of this, in practice, sensitivity analyses are essential. Second, positivity is critical for rule evaluation. If a rule assigns actions in the covariate space with little or no support in the observed data, inference will rely on extrapolation, and credible intervals may understate uncertainty. Third, when $d$ is learned from the same dataset used to evaluate mediation, additional variability and potential optimism may arise. However, this can be alleviated using cross-validation or a sample splitting paradigm.

Future work may extend the framework to dynamic treatment regimes, multiple mediators, and heterogeneous mediation. This approach advances precision medicine by complementing rule optimization with pathway-based explainability that can inform underlying mechanistic understanding.







\bibliographystyle{oup-abbrvnat}
\bibliography{paperpile}




\begin{appendices}

\section{Identification of Rule Mediation Estimands}\label{sec11}

\begin{lemma}[Identification of rule mediation estimands]
\label{lem:identification}
Suppose assumptions A1-A4 hold. Then for any pair of rules $d,r\in\mathcal D$, the cross-world rule-level mean is identified by the mediation g-formula,
\begin{align*}
  E\{Y^{d,r}(X)\} &= \int_{\mathcal X}\int_{\mathcal M} E(Y\mid A=d(x),M=m,X=x)\\
  &\qquad\times f_{M\mid A,X}\{m\mid A=r(x),X=x\}\,dm\,dF_X(x).
\end{align*}
In particular, taking $(d,r)$ equal to $(d,d)$, $(r,r)$, $(d,r)$, and $(r,d)$ identifies $V(d)$, $V(r)$, and the two cross-world means $E\{Y^{d,r}(X)\}$ and $E\{Y^{r,d}(X)\}$. Consequently the decomposition $V(d)-V(r)=\Delta_D(d,r)+\Delta_I(d,r)$ and each of its components are identified from the observed data distribution.
\end{lemma}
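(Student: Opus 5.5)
The plan is to reduce the rule-level statement to the standard static mediation identification argument, applied pointwise in $x$. The only feature that differs from the fixed-treatment case is that the treatment levels $d(x)$ and $r(x)$ depend on $x$. Once we condition on $X=x$, however, these are constants in $\{0,1\}$. Throughout, fix $x\in\mathcal X$ and write $a=d(x)$ and $a'=r(x)$. By iterated expectations,
\[
E\{Y^{d,r}(X)\}=\int_{\mathcal X} E[Y\{a,M(a')\}\mid X=x]\,dF_X(x),
\]
and on the event $X=x$ the nested outcome $Y^{d,r}(X)$ coincides with $Y\{d(x),M(r(x))\}$. So it suffices to show, for each fixed pair $a,a'\in\{0,1\}$, that
\[
E[Y\{a,M(a')\}\mid X=x]=\int_{\mathcal M}E(Y\mid A=a,M=m,X=x)\,f_{M\mid A,X}(m\mid a',x)\,dm.
\]
We then substitute $a=d(x)$ and $a'=r(x)$. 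Measurability of $d$ and $r$ keeps the outer integrand measurable in $x$.

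For fixed $a,a'$, the chain of equalities runs in four steps.

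\begin{enumerate}
\item Condition on the mediator value: $E[Y\{a,M(a')\}\mid X=x]=\int E[Y(a,m)\mid M(a')=m,X=x]\,dF_{M(a')\mid X=x}(m)$.

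\item Remove the conditioning on $M(a')=m$ using cross-world exchangeability (A3), giving $E[Y(a,m)\mid X=x]$.

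\item Rewrite $E[Y(a,m)\mid X=x]$ as an observed-data quantity. By A2, $E[Y(a,m)\mid X=x]=E[Y(a,m)\mid A=a,X=x]$. We then need to add the conditioning on $M(a)=m$, after which consistency (A1) gives $E(Y\mid A=a,M=m,X=x)$.

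\item Identify the mediator law. By A2 and A1, $F_{M(a')\mid X=x}=F_{M\mid A=a',X=x}$, which yields the density $f_{M\mid A,X}(m\mid a',x)$.
\end{enumerate}

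Positivity (A4) guarantees that the conditional quantities given $A=a$ and $A=a'$ are well defined. Its mediator-overlap clause guarantees that $E(Y\mid A=a,M=m,X=x)$ is defined, up to null sets, on the support of $f_{M\mid A,X}(\cdot\mid a',x)$, which is where the integral needs it.

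The main obstacle is step 3, specifically adding the conditioning on $M(a)=m$ given $A=a$. This requires $Y(a,m)\perp M(a)\mid A=a,X=x$. The plan derives it from the assumptions as stated. A3 with $a'$ replaced by $a$ gives $Y(a,m)\perp M(a)\mid X$. A2 gives joint independence of $\{Y(a,m),M(a)\}$ from $A$ given $X$. Together these imply the conditional independence given $A=a$, because the joint law of $(Y(a,m),M(a))$ given $(A=a,X)$ equals its law given $X$, and that law factorizes. This is the reason to read A2 as a joint statement. I expect the care needed to lie in this step together with the cross-world step 2: with a continuous mediator, conditioning on $\{M=m\}$ has to be handled through regular conditional distributions. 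I would phrase these steps as equalities of integrals over $m$ against the law of $M(a')$ given $X=x$, rather than as pointwise conditioning on null events.

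The consequences then follow immediately. Taking $(d,r)$ to be $(d,d)$, $(r,r)$, $(d,r)$ and $(r,d)$ expresses each of the four rule-level means as a functional of the observed distribution of $(X,A,M,Y)$. In particular, $V(d)=E\{Y^{d,d}(X)\}$ and $V(r)=E\{Y^{r,r}(X)\}$ are identified. Therefore $\Delta_D(d,r)=E\{Y^{d,r}(X)\}-E\{Y^{r,r}(X)\}$ and $\Delta_I(d,r)=E\{Y^{d,d}(X)\}-E\{Y^{d,r}(X)\}$ are identified as differences of identified quantities. Their sum telescopes to $V(d)-V(r)$.
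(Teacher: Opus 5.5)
Your proposal is correct and follows essentially the same route as the paper's proof. The paper also works pointwise in $x$: it takes an iterated expectation over $M(r(x))$, uses A3 to drop the conditioning, and uses A2 to add $A=d(x)$. It then combines A2 and A3 into $Y\{d(x),m\}\perp M(d(x))\mid A=d(x),X=x$ and applies A1 to reach the observed regression, and it identifies the mediator law through A2 and A1. Your explicit derivation of that combined independence and your attention to regular conditional distributions for a continuous mediator are refinements of the paper's argument rather than departures from it.
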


\begin{proof}
Fix $d,r\in\mathcal D$ and $x\in\mathcal X$. Positivity (A4) guarantees that the regression $E(Y\mid A=d(x),M=m,X=x)$ and the conditional density $f_{M\mid A,X}\{m\mid A=r(x),X=x\}$ are well defined on the relevant support. Write 
\begin{equation*}
    \mu(x)\equiv E\{Y^{d,r}(X)\mid X=x\}=E[Y\{d(x),M(r(x))\}\mid X=x].
\end{equation*}

Iterating the expectation over the potential mediator $M(r(x))$,
\begin{align*}
    \mu(x)&=\int_{\mathcal M} E\big[Y\{d(x),m\}\mid M(r(x))=m,\,X=x\big] dF_{M(r(x))\mid X}(m\mid x).
\end{align*}
By cross-world exchangeability (A3), $Y\{d(x),m\}\perp M(r(x))\mid X=x$, so the inner conditional expectation does not depend on the event $\{M(r(x))=m\}$ and can be expressed as
\begin{equation*}
    E\left[Y\{d(x),m\}\mid M(r(x))=m,X=x\right]=E\left[Y\{d(x),m\}\mid X=x\right].
\end{equation*}
For the outcome term, treatment exchangeability (A2) gives $Y\{d(x),m\}\perp A\mid X=x$, and therefore 
\begin{equation*}
    E[Y\{d(x),m\}\mid X=x]=E[Y\{d(x),m\}\mid A=d(x),X=x].
\end{equation*} 
Furthermore, the joint independence of A2 along with A3 yields $Y\{d(x),m\}\perp M(d(x))\mid A=d(x),X=x$. Since $M=M(d(x))$ on the event $A=d(x)$ by consistency (A1),
\begin{equation*}
    E\big[Y\{d(x),m\}\mid A=d(x),X=x\big]=E\big(Y\mid A=d(x),M=m,X=x\big).
\end{equation*}
For the mediator term, A2 gives $M(r(x))\perp A\mid X=x$, and consistency (A1) gives $M=M(r(x))$ on $A=r(x)$, so
\[dF_{M(r(x))\mid X}(m\mid x)=f_{M\mid A,X}\{m\mid A=r(x),X=x\}\,dm.\]
Substituting the last three equations into $\mu(x)$ and integrating over the marginal distribution of $X$ gives the stated g-formula. The remaining claims follow by noting the results hold without loss of generality to the choice of $(d,r)$ and noting that $\Delta_D(d,r)$ and $\Delta_I(d,r)$ are differences of the identified means.
\end{proof}

\end{appendices}

\end{document}